\newtheorem{defn}[theorem]{Definition}
\newtheorem{observation}[theorem]{Observation}
\def\blksquare{\rule{2mm}{2mm}}
\def\qedsymbol{\blksquare}
\def\blksquare{\rule{2mm}{2mm}}
\def\qedsymbol{\blksquare}
\newcommand{\bg}[1]{\medskip\noindent{\bf #1}}
\newcommand{\ed}{{\hfill\qedsymbol}\medskip}
\newenvironment{proofof}[1]{{\it{Proof of #1 : }}}{\ed}
\newcommand{\E}{\ensuremath{\textsc{E}}}
\newcommand{\R}{\ensuremath{\mathbb{R}}}
\newcommand{\opt}{\hbox{OPT}}
\newcommand{\Al}{\ensuremath{\mathcal{A}}}
\newcommand{\B}{\ensuremath{\mathcal{B}}}
\newcommand{\Aug}{\ensuremath{\textsc{Aug}}}
\newcommand{\tvec}{\ensuremath{{\bf t}}}
\newcommand{\avec}{\ensuremath{{\bf a}}}
\newcommand{\svec}{\ensuremath{{\bf s}}}
\newcommand{\F}{\ensuremath{\mathcal{F}}}
\newcommand{\mS}{\ensuremath{\mathcal{S}}}
\newcommand{\IG}{\ensuremath{\mathcal{IG}}}
\newcommand{\OurTitle}{Price of Stability in Games of Incomplete Information}
\begin{document}

\markboth{V. Syrgkanis}{\OurTitle}
\title{\OurTitle}
\author{Vasilis Syrgkanis
}

\institute{
Microsoft Research\\ {\tt vasy@microsoft.com}
}

\pagestyle{plain}

\maketitle

\begin{abstract}
We address the question of whether price of stability results (existence of equilibria with low social cost) are robust to incomplete information. We show that this is the case in potential games, if the underlying algorithmic social cost minimization problem admits a constant factor approximation algorithm via strict cost-sharing schemes. Roughly, if the existence of an $\alpha$-approximate equilibrium in the complete information setting was proven via the potential method, then there also exists a $\alpha\cdot \beta$-approximate Bayes-Nash equilibrium in the incomplete information setting, where $\beta$ is the approximation factor of the strict-cost sharing scheme algorithm. We apply our approach to Bayesian versions of the archetypal, in the price of stability analysis, network design models and show the existence of $O(\log(n))$-approximate Bayes-Nash equilibria in several games whose complete information counterparts have been well-studied, such as undirected network design games, multi-cast games and covering games.
\end{abstract}

\section{Introduction}

Quantifying the inefficiency caused by selfish user behavior in large engineered systems, such as communications networks, has been a central goal in the intersection of computer science and game theory. A large amount of literature, starting from the early works of \cite{Koutsoupias1999} and \cite{Roughgarden2002}, has analyzed the \emph{price of anarchy}: the ratio of the social cost of the worst selfish equilibrium outcome over the optimal social cost that a central planner could implement. At the counterpart of this approach is the notion of the \emph{price of stability} introduced in \cite{Anshelevich2004}, which is a more optimistic inefficiency quantification, looking at the best rather than worst equilibrium outcome. The intuition behind this notion is that such an outcome is the best individually stable outcome that a central planner can propose to selfish users. 

The \emph{price of stability} has been extensively analyzed in many network design scenarios, where selfish users interact over a network, each trying to optimize some individual objective \cite{Anshelevich2004,Bilo2010,Anshelevich2011,Bilo2013}. However, prior to this work, all the literature on quantifying the \emph{price of stability} has considered only complete information models, where the users know all the parameters of the game that they are playing, such as the location of all the users in the network and the objective of each user.

The goal of this work is to relax this informational assumption, and understand whether such optimistic efficiency results carry over to settings where players have incomplete information about the game. Such settings are predominantly modeled via Bayesian games \cite{Harsanyi}, where instead of complete information, players have distributional beliefs about the parameters of the game and are maximizing their objective in expectation over these beliefs. Such a unilaterally, stable in-expectation outcome is referred to as the Bayes-Nash equilibrium of the game. In this work we will analyze how the 
expected quality of the best such Bayes-Nash equilibrium (in expectation over the parameters of the game), compares
to the expected optimal solution that a central planner who is aware of all the parameters of the game, could have implemented. 
We will refer to the ratio of these two expected values as the \emph{Bayesian price of stability}.

The prototypical scenario, where the \emph{price of stability} analysis has been applied is that of network design games: a set of self-interested network users want to build a set of edges of a communication network to satisfy some connectivity constraint, such as connect a source to some root (multi-cast) or connect a source node to a target node. Each edge of the network has a building or usage cost, which is shared equally among all users that have chosen to use the edge. Such a model applies to large communication networks, where the built network  corresponds to 
a virtual overlay or a multicast tree on the Internet \cite{Herzog1995,Feigenbaum2001}. 

Incomplete information models are more suitable for such large scale communication networks, where the users are not aware of the location of all users in the network. Moreover, running a pre-processing protocol to learn the location and objectives of all users might be too costly and time-consuming, and even more importantly runs into mis-reporting problems. Motivated by such concerns, we analyze the incomplete information version of network design games, where the connectivity
objective of each user is private and drawn from some commonly known distribution. Such network design games of incomplete information where already studied by Alon et al. \cite{Alon2010}, however they did not analyze the question of whether price of stability guarantees persist in such settings and their techniques can only lead to Bayesian price of stability upper bounds that degrade with the size of the underlying graph, rather than the constant blow-ups we show here, with respect to the complete information price of stability. 

One of the engineering oriented motivations of analyzing the \emph{Bayesian price of stability} in such network design games of incomplete information is similar to their complete information counterparts: the designer of the network can propose a protocol that consists of a set of ``default" strategies for satisfying each connectivity objective over the network. For instance, the protocol could provide default paths from each potential source node to each potential target node or root. These paths are stable, or equivalently constitute a Bayes-Nash equilibrium, only if for each player, using the default path, minimizes his cost among all possible paths, and in expectation over the location of other users, assuming the rest of the users use the protocol. If the protocol doesn't satisfy this constraint then eventually self-interested individuals will start violating it, leading to potentially large inefficiencies. Thus the quality of the best Bayes-Nash equilibrium, corresponds to the best stable protocol that the network designer can devise.

In the complete information setting, \cite{Anshelevich2004} showed that although inefficiency of the worst equilibrium outcome 
of such network deisgn games, can grow linearly with the number of users, the social cost of the best equilibrium outcome is only an $O(\log(n))$ blow-up to the minimum social cost, where $n$ is the number of users. In other words, it was shown that the \emph{price of stability} is at most $O(\log(n))$. Our goal is to provide similar guarantees for the \emph{Bayesian price of stability} of these games.

\paragraph{Our Contributions.} Our main result is that the $O(\log(n))$ bound on the price of stability of most well-studied network design games, extends with only a constant degradation to the incomplete information versions of the game.
Specifically, we show that this holds for multi-cast network design where each player wants to build a path to a global root and for the undirected network design game where each player wants to build a path from some user-specific source to some sink, as well as for several related versions of covering games. For the case of multi-cast games the extension holds for the case where each player's source is distributed independently and asymmetrically over the set of nodes of the network, while for the undirected network design game and for the covering games it holds when the source-target pair of each user is distributed independently and identically over the set of node pairs of the network.

Apart from the specific results on network design games, our work gives a framework for proving \emph{Bayesian price of stability} bounds for any incomplete information game that admits a potential function: 
 i.e. a global function that tracks the difference in the cost of a user after any unilateral deviation. 
The main technique for providing bounds on the price of stability for complete information potential games is 
what is known as the \emph{potential method} \cite{Anshelevich2004}: Every equilibrium of a potential game is a local minimizer of the potential function. If the potential of the game is closely related to the social cost, then the equilibrium that corresponds to the global minimizer of the potential function must achieve a social cost that is not much more than the optimal social cost, i.e. minimizing the potential approximately minimizes the social cost. 

Our approach on bounding the \emph{Bayesian price of stability} is a black-box direct extension approach. Specifically, our goal is to claim that if the complete information version of the game has a low \emph{price of stability} via a \emph{potential method} proof, then the same bound extends with small degradation to the \emph{Bayesian price of stability}. The hope that such a black-box direct extension could be feasible is partially re-enforced by the fact that a similar ``extension theorem'' approach has proven very successful for the \emph{price of anarchy} analysis \cite{Roughgarden2012,Syrgkanis2012}, where it is shown that every \emph{price of anarchy} bound for a complete information game, proven via a canonical, \emph{smoothness} proof \cite{Roughgarden2009}, essentially extends, without any degradation to incomplete information versions of the game. 

We show that such a direct extension is sometimes also feasible for the analysis of the \emph{Bayesian price of stability}. Specifically, we show that when the price of stability of the complete information game is bounded by some factor $\alpha$, via the \emph{potential method}, then the \emph{Bayesian price of stability} is upper bound by a factor $\alpha\cdot \beta$, where
$\beta$ is an optimization, incentive-free, parameter of the game, that we refer to as the information gap. The information gap asks what is the loss in efficiency caused solely by lack of information and not by selfishness and was already analyzed by Alon et al. \cite{Alon2010}, mostly providing negative results.

At a high level, using similar reasoning as in the complete information game, every Bayes-Nash equilibrium of the Bayesian game will correspond to a local minimizer of some expected potential function. Hence, the expected social cost, when the players use strategies that minimize the expected potential should be close to the strategy profile that minimizes the expected social cost. However, a player's strategy in the Bayesian game is very constrained, as a player has information only about his private type. Hence, even if the players' goal is to minimize the social cost of the game, their information constraints disallow them to implement the ex-post social cost minimizing solution that corresponds to each instantiation of player types. Comparing the expected social cost achievable by strategies that use only private information, to the expected social cost of a central planner that has access to all the information, is exactly captured in the notion of the information gap.

The main technical challenge is then to show that the information gap is constant for the network design games that we are interested in. 
We show that there is a strong connection between the existence of approximation algorithms with
strict cost sharing schemes \cite{Gupta2003,Gupta2007} and the information gap. Specifically, we show that if the 
complete information social cost minimization problem that is associated with the Bayesian game, admits a constant-approximate
algorithm with a strict cost-sharing scheme, then the information gap of the Bayesian game, when player's types are distributed i.i.d. is also a constant. Moreover, if the strict cost-sharing scheme is also cross-monotonic \cite{Gupta2005,Pal2003} then 
the information gap bound extends to the non-i.i.d. setting. It is known that the Steiner Forest and the Vertex Cover problem associated with the undirected network design games and with the covering games admit a strict cost sharing scheme, while the Steiner Tree problem which is associated with the multi-cast game admits a strict cost sharing scheme that is also cross-monotone. Thus using our framework, these existing cost-sharing results imply a constant bound on the information gap and subsequently a constant degradation between the \emph{price of stability} and the \emph{Bayesian price of stability}.

\paragraph{Related Work.}
The \emph{price of stability} was introduced by \cite{Anshelevich2004} and has since been applied to many game theoretic models of networks \cite{Anshelevich2004,Bilo2010,Anshelevich2011,Bilo2013} and in other contexts, such as in discrete choice models of opinion \cite{Chierichetti2013}. However, as noted before, all the literature on the price of stability
assumes complete information models. A very interesting future direction is applying our approach to other models 
where the price of stability has been analyzed via the potential method, such as the discrete choice model of \cite{Chierichetti2013}, where we can analyze the case when the opinion of each player is private. 

The complete information counterparts of the network design games that we analyze here, have been extensively studied from several perspectives. \cite{Charikar2008,Chekuri2007} analyze the efficiency in multi-cast games where users arrive in an online manner and each user connects greedily to the root, when he arrives. The online setting has a similar spirit as our motivation in that users play in a manner that is unaware of opponent locations. However, the approach is very different and in \cite{Charikar2008,Chekuri2007}, it is assumed that players wont take into account any distributional knowledge about their opponents. \cite{Epstein2009} analyze the worst outcomes of the simultaneous network design game that are robust to coalitional deviations (strong Nash equilibria) and show that their quality matches the $O(\log(n))$ bound of the best Nash equilibria.  \cite{Balcan2010}, analyze best response dynamics in network design games, where the social planner proposes solutions that are followed by the players with some small probability. It is an interesting future direction to analyze versions of best-response dynamics that are natural under incomplete information. \cite{Leme2012} analyzes a special subclass of network design games and argues that when players arrive sequentially, the worst subgame-perfect equilibrium of the sequential game 
also achieves the $O(\log(n))$ bound of the best Nash equilibrium of the simultaneous game. 

Games of incomplete information have been recently extensively studied from the price of anarchy perspective, focusing mainly on models that capture auction scenarios: 
\cite{Christodoulou2008,Bhawalkar2011,Hassidim2011,Feldman2013,
Caragiannis2011,Syrgkanis2013}. Similar to our direct extension approach on the price of stability, \cite{Roughgarden2012} and \cite{Syrgkanis2012}, showed that a price of anarchy bound proven for the complete information setting via the smoothness framework proposed in \cite{Roughgarden2009}, directly extends with no degradation to the incomplete information version of the game. However, network design games are not smooth. 
Moreover, their price of anarchy grows linearly with the number of players as opposed to the price of stability. 

Strict cost sharing schemes, which is our main technical tool for proving upper bounds on the information gap, has played a quintessential role in designing simple polynomial time constant-factor approximation algorithms for social cost minimization problems associated with network design \cite{Gupta2003,Gupta2007}. This algorithmic framework 
was introduced in \cite{Gupta2003,Gupta2007} and subsequently used by many papers to design improved approximation algorithms for cost minimization games \cite{Fleischer2006}. Moreover, several papers have shown connection to stochastic versions of the algorithmic problem \cite{Gupta2004,Gupta2005,Gupta2011,Garg2008}. Our result can be viewed as an analogue to this extension from deterministic to stochastic versions of the problem, but for game-theoretic solution concepts and in fact our approach bears similarities to the approach of \cite{Garg2008}.
\vsdelete{In \cite{Gupta2004} it was shown that
the existence such approximation algorithms with strict cost sharing schemes, implies approximation algorithms for two-stage stochastic versions of network design problems, where the designer makes an investment in the first stage, when edge costs are cheaper, and then augments the solution in the second stage after the set of clients is revealed. This two stage process is similar in spirit with the ex-ante and ex-interim stage of the Bayesian game. \cite{Gupta2005,Gupta2011}, extend this connection to multi-stage stochastic versions of the problem and show that the cross-monotonicity of the strict cost-sharing scheme implies approximation algorithms for multi-stage stochastic problems too. It is interesting that cross-monotonicity also plays an important role in extending our results from the i.i.d. Bayesian setting to the non-i.i.d. Bayesian setting. A more elaborate understanding of this connection is an interesting future direction. Our analysis on the connection of the information gap with strict cost-sharing schemes is most closely related to the analysis in \cite{Garg2008}. \cite{Garg2008} analyzes an online stochastic version of a network design problem where at each iteration a client is drawn i.i.d. from some distribution. The algorithm then needs to make an online decision after observing each client. The difference with that setting is that the algorithm can base it's decision at iteration $t$, on the location of all clients that have arrived in previous iterations. This is for instance not allowed by a strategy of a player in the Bayesian game. However, we show that an adaptation of the techniques of \cite{Garg2008}, implies a bound on the information gap for i.i.d. settings. Moreover, our analysis for cross-monotonic schemes extends beyond the i.i.d. setting that was analyzed in \cite{Garg2008}. It is of interest, but outside the scope of this paper, to revisit and extend the results of \cite{Garg2008}, for stochastic online problem, but for non-identical client distributions at each iteration, based on our analysis of cross-monotonic strict cost sharing schemes.} 
Cost-sharing schemes have also been studied from a game theoretic perspective in mechanism design scenarios associated with network games \cite{Pal2003,Immorlica2008}. 


\section{Bayesian Games and Quality of Best Bayes-Nash}\label{sec:prelim}
Formally a Bayesian game of n players 
is defined as follows: Each player has a type $t_i$ drawn from some prior probability distribution $\pi_i$ on some type space $T_i$.
The distributions $\pi_i$ are independent and common knowledge and we denote with $\pi=\pi_1\times\ldots\times \pi_n$. 
Each player has a set of feasible actions $\F_i(t_i)$ that depend on his type and which are a subset of some universal action space $A_i$. We denote with $A=A_1\times\ldots\times A_n$. The cost of a player is a function
$C_i: A\times T_i\rightarrow \R$, mapping a feasible action profile and his type to a cost. The strategy of a player is a function $s_i: T_i\rightarrow A_i$, that takes as input his type $t_i$ and outputs a feasible action $a_i\in \F_i(t_i)$. We will denote the space of these Bayesian strategies with $\mS_i$. 

The standard solution concept in a Bayesian game is that of the Bayes-Nash equilibrium (BNE), which is a profile of strategies $\svec=(s_1,\ldots,s_n)$, such 
that each player is minimizing his cost in expectation over the types and actions of his opponents:
\begin{equation}
\forall a_i'\in \F_i(t_i): \E_{\tvec_{-i}}\left[C_i(\svec(\tvec)~;~t_i)\right] \leq \E_{\tvec_{-i}}\left[C_i(a_i',\svec_{-i}(\tvec_{-i})~;~t_i)\right] 
\end{equation}

We are interested in analyzing the expected social cost at equilibrium, which is defined as the sum of the players' expected costs:
\begin{equation}
\textstyle{K(\svec) = \E_{\tvec}\left[\sum_i C_i(\svec(\tvec)~;~t_i)\right]}
\end{equation}
We want to compare this social cost with the expected optimal social cost that would be achievable by a central coordinator who would observes the types of all players and picks the optimal action profile for each realized set of types. For a  type profile $\vec{t}$ let: 
\begin{equation}
\textstyle{\opt(\tvec)=\arg\min_{\avec\in \times_{i} A_i(t_i)}\sum_i C_i(\avec~;~t_i)}
\end{equation}
We are interested in showing that there exists some Bayes-Nash equilibrium where the expected social cost is at most constant times that of the expected optimal social cost. This is formally captured by the \emph{Bayesian Price of Stability} of the game, defined as:
\begin{equation}
BPoS = \min_{\svec \text{ is BNE}} \frac{K(\svec)}{\E_{t}[\opt(\tvec)]}
\end{equation}

\paragraph{Bayesian Potential Games.}\label{sec:potential}
We focus on the special case of a Bayesian game where the type $t_i$ of each player only affects and determines his action space and not directly his cost function. Thus 
there is a universal cost function $c_i:A\rightarrow \R^+$ that defines a cost given any action profile $\vec{a}\in A$ 
such that: 
$C_i(\avec~;~t_i)= c_i(\avec)$ if $\avec\in \F_i(t_i)$ and $\infty$ o.w..

We assume that the universal cost functions $c_i(a)$ admit a universal potential $\Phi:A\rightarrow \R$, i.e. for any action profile $\avec\in A$ and for any $a_i'\in A_i$: $c_i(\avec)-c_i(a_i',\avec_{-i}) = \Phi(\avec)-\Phi(a_i',\avec_{-i})$. 
Moreover, we denote with $C(\avec)= \sum_{i} c_i(\avec)$ the \emph{universal social cost} for any action profile. We will refer to such a game as a \emph{Bayesian potential game}. 

We first observe that if we view the Bayesian game as a normal form game where the strategy space of each player is $\mS_i$ and the cost function is $K_i(\svec) = \E_{\tvec}[c_i(\svec(\tvec))]$, then it is a potential game with potential $\Psi(\svec)=\E_{\tvec}[\Phi(\svec(\tvec))]$.
\begin{observation}\label{lem:bayesian-potential}
The normal form game defined by a Bayesian potential game is a potential game with potential $\Psi(\svec)=\E_{\tvec}[\Phi(\svec(\tvec))]$.
\end{observation}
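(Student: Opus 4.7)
The plan is to verify the defining relation of a potential function for the normal-form game by reducing it, pointwise in the type profile $\tvec$, to the universal potential property of $\Phi$, and then taking expectations. Fix a player $i$, two of her Bayesian strategies $s_i, s_i' \in \mS_i$, and any opponent profile $\svec_{-i} \in \prod_{j \neq i} \mS_j$. The goal is to establish
\begin{equation}
K_i(s_i, \svec_{-i}) - K_i(s_i', \svec_{-i}) = \Psi(s_i, \svec_{-i}) - \Psi(s_i', \svec_{-i}).
\end{equation}

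First, I would unfold both sides using the definitions $K_i(\svec) = \E_{\tvec}[c_i(\svec(\tvec))]$ and $\Psi(\svec) = \E_{\tvec}[\Phi(\svec(\tvec))]$ together with linearity of expectation, reducing the claim to the identity
\begin{equation}
\E_{\tvec}\bigl[c_i(s_i(t_i), \svec_{-i}(\tvec_{-i})) - c_i(s_i'(t_i), \svec_{-i}(\tvec_{-i}))\bigr] = \E_{\tvec}\bigl[\Phi(s_i(t_i), \svec_{-i}(\tvec_{-i})) - \Phi(s_i'(t_i), \svec_{-i}(\tvec_{-i}))\bigr].
\end{equation}
Next, I would apply the assumed universal potential relation $c_i(\avec) - c_i(a_i'', \avec_{-i}) = \Phi(\avec) - \Phi(a_i'', \avec_{-i})$ pointwise for each fixed $\tvec$, instantiating $\avec = (s_i(t_i), \svec_{-i}(\tvec_{-i}))$ and $a_i'' = s_i'(t_i) \in A_i$. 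The two integrands then coincide for every $\tvec$ in the support of $\pi$, so the expectations are equal and the desired potential identity follows.

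The only minor verification is that working with the universal cost $c_i$ inside $K_i$, rather than the type-dependent cost $C_i$, is legitimate: by definition a Bayesian strategy always outputs a feasible action, so $s_i(t_i), s_i'(t_i) \in \F_i(t_i)$ and $s_j(t_j) \in \F_j(t_j)$ for every $j \neq i$ and every $\tvec$ in the support of $\pi$, which means the infinite-cost branch of $C_i$ is never triggered along any relevant deviation. I do not anticipate any real obstacle here — the entire content of the observation is that a pointwise potential identity is preserved under expectations, which follows by a single application of linearity of expectation to the universal potential property.
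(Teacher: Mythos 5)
Your proof is correct and follows essentially the same route as the paper's: expand $K_i$ and $\Psi$ by linearity of expectation and apply the universal potential identity pointwise for each type profile. Your additional remark that feasibility of Bayesian strategies justifies replacing $C_i$ by $c_i$ inside the expectation is a small point the paper's proof passes over silently, but the argument is otherwise identical.
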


 Moreover, a Nash equilibrium of this normal form game is a Bayes-Nash equilibrium of the Bayesian potential game. The latter then implies that the strategy profile $\svec$ that minimizes the potential function $\Psi(\svec)$ is a Bayes-Nash equilibrium:
\begin{observation}
The strategy profile $\svec^*=\arg\min_{\svec} \Psi(\svec)$ is a Bayes-Nash equilibrium of the Bayesian potential game.
\end{observation}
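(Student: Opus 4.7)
The plan is to chain together the two ingredients already on the table: Observation~\ref{lem:bayesian-potential} states that $\Psi(\svec)=\E_{\tvec}[\Phi(\svec(\tvec))]$ is an exact potential for the normal form game whose strategy sets are the $\mS_i$ and whose costs are $K_i(\svec)=\E_{\tvec}[c_i(\svec(\tvec))]$, and the sentence immediately preceding the statement asserts that any Nash equilibrium of this normal form game is already a Bayes-Nash equilibrium of the underlying Bayesian game. Given these two facts, the proof reduces to the standard ``minimizer of a potential is an equilibrium'' step.

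First I would apply the potential identity at $\svec^*$. For every player $i$ and every Bayesian deviation $s_i'\in \mS_i$,
\begin{equation}
K_i(\svec^*)-K_i(s_i',\svec^*_{-i}) = \Psi(\svec^*)-\Psi(s_i',\svec^*_{-i}) \le 0,
\end{equation}
where the inequality uses that $\svec^*$ globally minimizes $\Psi$. Hence $K_i(\svec^*)\le K_i(s_i',\svec^*_{-i})$ for every player $i$ and every $s_i'\in \mS_i$, so $\svec^*$ is a pure Nash equilibrium of the normal form game, and by the remark preceding the statement it is therefore also a Bayes-Nash equilibrium.

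For completeness, the translation from the normal form Nash condition to the pointwise Bayes-Nash condition in equation~(1) is as follows. If some player $i$ had a type $t_i$ with $\pi_i(t_i)>0$ and a feasible action $a_i'\in \F_i(t_i)$ violating the BNE inequality, define $s_i'\in \mS_i$ by $s_i'(t_i)=a_i'$ and $s_i'(t_i')=s_i^*(t_i')$ for all $t_i'\neq t_i$. Feasibility of $a_i'$ at $t_i$ keeps $s_i'$ in $\mS_i$, and independence of types lifts the strict per-type improvement to $K_i(s_i',\svec^*_{-i})<K_i(\svec^*)$, contradicting the previous paragraph. There is no real obstacle here; the only point requiring care is verifying that the one-coordinate modification of $s_i^*$ is a legal Bayesian strategy, which is immediate from the definition of $\mS_i$ and the feasibility of $a_i'$ at type $t_i$.
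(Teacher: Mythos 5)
Your proposal is correct and follows essentially the same route as the paper: the paper derives this observation directly from the preceding two facts (that $\Psi$ is an exact potential for the normal form game over $\mS_1\times\ldots\times\mS_n$, and that a Nash equilibrium of that game is a Bayes-Nash equilibrium), exactly as you do. Your extra paragraph making explicit the passage from the normal-form Nash condition to the pointwise per-type BNE inequality is a detail the paper leaves implicit, but it is the standard argument and matches the paper's intent.
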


\paragraph{Bayesian Potential Method and the Information Gap.}\label{sec:method}
We will focus on Bayesian potential games, for which the \emph{potential method} of bounding the price of stability (see \cite{Anshelevich2004}) is applicable to their complete information versions. More formally, suppose that the universal potential function $\Phi(\avec)$ is $(\lambda,\mu)$-close to the universal social cost $C(\avec)$ in the sense that:
\begin{equation}
\forall \avec\in A: \lambda \cdot C(\avec)\leq \Phi(\avec)\leq \mu\cdot C(\avec)
\end{equation}
then it was shown by \cite{Anshelevich2004} that the best Nash equilibrium of the complete information potential game has social cost at most $\frac{\mu}{\lambda}$
of the minimum social cost. 
\vsdelete{The proof of this fact is rather neat and we present it here for completeness: An action profile $\avec$ of the complete information potential game is a Nash equilibrium if and only if it is a local minimum of the universal potential function $\Phi(\cdot)$. Now consider the Nash equilibrium $\avec^*$ that is the global minimum of the 
universal potential function. Then by the $(\lambda,\mu)$-closeness of the potential function, it follows that this equilibrium has social cost at most $\frac{\mu}{\lambda}$ times
the minimum social cost (denoting by $\opt$ the cost minimizing action profile):
\begin{equation}
C(\avec^*) \leq \frac{1}{\lambda} \Phi(\avec^*) \leq \frac{1}{\lambda} \Phi(\opt(t)) \leq \frac{\mu}{\lambda} C(\opt)
\end{equation}}

We provide an adaptation of this approach for Bayesian potential games. When analyzing Bayesian games, there is an inherrent extra complication that 
is not present in complete information games: playing the action profile that pointwise minimizes the social cost for each instantiation of the type profile
is not a feasible strategy of the Bayesian game. The reason is that for the players to employ such an action they need to know the type of their opponents, which is
private information. Hence, this lack of information does not allow the straightforward application of the \emph{potential method} to Bayesian games and requires us to analyze the following portion: 
%
\begin{defn}[Information Gap] Let $\tilde{\vec{s}}=\arg\min_{\vec{s}\in \mS_1\times\ldots\times \mS_n} K(\vec{s})$. Then the information gap of a Bayesian game is:
$\IG = \frac{K(\tilde{\vec{s}})}{\E_{\vec{t}}[\opt(\vec{t})]}$.
\end{defn}
The information gap is the ratio of the best social cost achievable by players via strategies that use only their private information (and each player's goal is to minimize social cost rather than personal cost), over the expected optimal cost of a centralized coordinator that computes the optimal action having access to the private type of every player. 

\begin{theorem}[Bayesian Potential Method]
Consider a Bayesian potential game, where the universal potential function $\Phi(\cdot)$ is $(\lambda,\mu)$-close to the universal social cost function $C(\cdot)$. Then the BPoS is at most $\frac{\mu}{\lambda}\cdot \IG$
\end{theorem}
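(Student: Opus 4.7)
The plan is to mimic the classical potential method proof, but with the Bayesian potential $\Psi(\svec)=\E_{\tvec}[\Phi(\svec(\tvec))]$ playing the role of the complete-information potential, and with the information gap serving as the bridge that compensates for the inability of Bayesian strategies to implement the pointwise optimum $\opt(\tvec)$.

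First I would lift the pointwise $(\lambda,\mu)$-closeness of $\Phi$ and $C$ to an analogous closeness in expectation between $\Psi$ and $K$. Concretely, for any strategy profile $\svec\in \mS_1\times\cdots\times\mS_n$, integrating the inequality $\lambda\,C(\svec(\tvec))\le \Phi(\svec(\tvec))\le \mu\,C(\svec(\tvec))$ over $\tvec\sim\pi$ yields
\begin{equation}
\lambda\, K(\svec)\;\le\;\Psi(\svec)\;\le\;\mu\, K(\svec).
\end{equation}

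Next I would let $\svec^{*}=\arg\min_{\svec}\Psi(\svec)$, which by the preceding observation is a Bayes-Nash equilibrium, and let $\tilde{\svec}=\arg\min_{\svec}K(\svec)$ be the social-cost-minimizing strategy profile used in the definition of $\IG$. Chaining the two sides of the closeness bound with the fact that $\svec^{*}$ globally minimizes $\Psi$ gives
\begin{equation}
\lambda\, K(\svec^{*})\;\le\;\Psi(\svec^{*})\;\le\;\Psi(\tilde{\svec})\;\le\;\mu\, K(\tilde{\svec}).
\end{equation}
Dividing through by $\lambda\cdot \E_{\tvec}[\opt(\tvec)]$ and using the definition of the information gap produces
\begin{equation}
\frac{K(\svec^{*})}{\E_{\tvec}[\opt(\tvec)]}\;\le\;\frac{\mu}{\lambda}\cdot\frac{K(\tilde{\svec})}{\E_{\tvec}[\opt(\tvec)]}\;=\;\frac{\mu}{\lambda}\cdot\IG,
\end{equation}
and since $\svec^{*}$ is a BNE this directly upper bounds $\mathrm{BPoS}$, completing the argument.

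There is no serious technical obstacle here; the content of the theorem is conceptual rather than computational. The key point to emphasize is \emph{why} the $\IG$ factor is forced to appear: in the complete-information potential method one compares $\Phi(\avec^{*})$ directly to $\Phi(\opt)$, but in the Bayesian setting $\opt(\tvec)$ is not realizable by any strategy profile $\svec\in\mS_1\times\cdots\times\mS_n$ because it uses information the players do not have. Restricting the comparison to the best feasible strategy profile $\tilde{\svec}$ is precisely what the information gap measures, and this is the only place in the proof where the complete-information argument must be modified.
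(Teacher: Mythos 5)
Your proof is correct and follows essentially the same route as the paper's: both compare the potential-minimizing Bayes-Nash equilibrium $\svec^*$ to the social-cost-minimizing feasible strategy profile $\tilde{\svec}$ via the chain $\lambda K(\svec^*)\le \Psi(\svec^*)\le \Psi(\tilde{\svec})\le \mu K(\tilde{\svec})$ and then invoke the definition of $\IG$. The only cosmetic difference is that you state the expected-level $(\lambda,\mu)$-closeness of $\Psi$ and $K$ as an explicit intermediate step, which the paper folds directly into its one-line inequality chain.
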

\begin{proof}
Consider the potential minimizing Bayes-Nash equilibrium $\vec{s}^*$ and the social cost minimizing strategy profile $\tilde{\vec{s}}$. Then:
\begin{align*}
K(\vec{s}^*)=~&\E_{\vec{t}}\left[ C(\vec{s}^*(\vec{t}))\right]\leq \textstyle{\frac{\E_{\vec{t}}\left[ \Phi(\vec{s}^*(\vec{t}))\right]}{\lambda}
 \leq~\frac{\E_{\vec{t}}\left[\Phi(\tilde{\vec{s}}(\vec{t}))\right]}{\lambda}\leq \frac{\mu K(\tilde{\vec{s}})}{\lambda} \leq \frac{\mu \IG}{\lambda}  \E_{\vec{t}}\left[\opt(\vec{t})\right]}
\end{align*}
\qed\end{proof}

Bounding the information gap is the main technical challenge and the topic of the remaining sections. We will  focus on Bayesian versions of network design games \cite{Anshelevich2004}, which are Bayesian potential games where the universal potential is $(1,\log(n))$-close to the universal social cost. We  show that in such games the information gap is constant.

\section{Bayesian Network Design Games}\label{sec:network-design}
Bayesian potential games include intuitive Bayesian versions of \emph{network design games}. We first give the general definition of a \emph{Bayesian network design game} and then we give subclasses that will be of interest in the paper. 

\begin{defn}[Bayesian Network Design Game]\label{defn:network-design} A Bayesian network design game is characterized by a set of ground elements $E$ and a set of players $[n]$. The type space $T_i$ is common for all players and is denoted with $T$, i.e. $T_i=T$ for all $i\in [n]$. The universal action space $A_i$ is also common for all players and is simply the power set of $E$, i.e. $A_i=2^{E}$.  Conditional on her type, $t_i\in T$, the action space of player $i$ is a collection of sets of ground elements $\F_i(t_i)\subseteq 2^E$.
Each element $e\in E$, is associated with a non-negative cost 
$c_e$, which can be thought as the cost of using element $e$. Given an action profile $\avec$, we denote with $n_e(\avec)$, the congestion of element $e$, i.e. the number of players whose action $a_i$ contains $e$. If many players use the same element, then they equally share the building cost. The total cost of a player 
when she chooses a feasible action $a_i\in A_i(t_i)$, is the sum of his cost shares:
$c_i(\avec)=\sum_{e\in a_i} \frac{c_e}{n_e(\avec)}$.
\end{defn}

It is known that the complete information counterparts of these games, where the type of each player is fixed and common knowledge, admit Rosenthal's potential \cite{Rosenthal}, which 
would correspond to the universal potential of the Bayesian game under our terminology:
$\Phi(\avec) = \sum_{e\in E} \sum_{t=1}^{n_e(\avec)} \frac{c_e}{t}$. Therefore, a Bayesian network design game is a Bayesian potential game. Moreover, as was shown in \cite{Anshelevich2004}, Rosenthal's potential is $(1,\log(n))$-close to the universal social cost, $C(\avec)=\sum_{e\in E: n_e(\avec)\geq 1} c_e$. 

Network design games become more meaningful when the set of ground elements
$E$ corresponds to edges of a network, while the feasible action space of a player for a specific type $A(t_i)$, corresponds to a set of paths on the network connecting a set of nodes. The type of each player is the connectivity constraint that the paths need to satisfy. The following examples of network design games will be of main interest:

\begin{example}[Bayesian multicast game with private sources]
The set of ground elements correspond to the edges of a metric graph $G=(V,E)$, which has a designated root  node $r\in V$. Each player $i$ is associated with a source $s_i$, which is her private type and is drawn independently from a distribution $\pi_i$ over the set of terminals $V$. Each player wants to connect his source to the root $r$. Thus the feasible actions $\F_i(s_i)$ of a player of type $s_i$, is the set of paths from source $s_i$ to the root $r$. 
\end{example}

\begin{example}[Bayesian undirected network design with private source-sink pairs] The set of ground elements correspond to the edges of an undirected graph $G=(V,E)$. Each player $i$ is associated with a node pair $t_i=(s_i,r_i)$, of a source $s_i\in V$ and a sink $r_i\in V$, which are her private type and which are drawn independently from a distribution $\pi_i$ over the set of node pairs. Each player wants to connect her source to her sink. Thus the feasible actions $\F_i(t_i)$ 
is the set of paths from $s_i$ to $r_i$. 
\end{example}

\section{Bounding the Information Gap in Network Design Games}\label{sec:gap}
To bound the information gap of a \emph{Bayesian network design game}, we need to find a strategy profile $\vec{s}\in \mS_1\times\ldots\times\mS_n$ of the Bayesian game, such that the expected cost of the resulting strategy profile is close to the expected ex-post minimum social cost. To address this problem we first 
analyze what is the equivalent ex-post social cost minimization problem of the network design games that we defined in Section \ref{sec:network-design}.

From Definition \ref{defn:network-design}, 
the ex-post cost-minimization problem for each instance $\tvec$ of 
a type profile asks for the set of ground elements $S\subseteq E$ of minimum total building cost, such that $S$ is the union of feasible actions of the players for their type. We can modify all the network design games in Section \ref{sec:network-design} such that if a set $S$ of ground elements is a feasible action 
for a player, then every superset of $S$, is also a feasible strategy (e.g. if a path from $s$ to $r$ is a feasible strategy in the multicast game, then every superset of a path is also a feasible strategy). At equilibrium, no player will ever use non-minimal strategies and hence the two games are equivalent. Then the minimization problem asks for a set $S$, such that there exists
$a_i\in F_i(t_i)$ with $a_i\subseteq S$, which minimizes, $\sum_{e\in S} c_e$.

For instance, the ex-post social cost minimization problem of the Bayesian multicast game is the minimum cost Steiner Tree problem and of the undirected network design game it is the Steiner Forest. 

\paragraph{High level idea.} We upper bound the information gap by constructing a strategy profile of the Bayesian game as follows: prior to receiving their private types
the players draw a random sample of a type profile $\vec{w}\in T_1\times \ldots\times T_n$ from the prior distribution of types $\pi$. Then they build a solution for this random sample
(e.g. build a spanning tree for some random sample of sources). Then in the interim stage, after they receive their private type $t_i$, each of them independently and based only on her private type, augments the solution constructed in the ex-ante stage to make it feasible for his realized private type (e.g. routes to the nearest node of the steiner tree built in the first stage).  

The key property needed is that the solution of the first stage is constructed via some approximation algorithm which  admits \emph{cheap augmentation} when new players arrive. This is exactly captured by the existence of an algorithm that admits a strict cost sharing scheme, a well-studied notion which we provide here for completeness. 

\paragraph{Approximation algorithms with strict cost sharing schemes for sub-additive problems.}

Consider the following setting: Let $T$ be a universe of possible clients and $E$ a set of elements used to define a solution. For a set of clients $U\subseteq T$ let $Sol(U)\subseteq 2^E$ be the set of feasible solutions for $U$. The problem is \emph{sub-additive} if for any $U_1, U_2 \subseteq T$, if $E_1\in Sol(U_1)$ and $E_2\in Sol(U_2)$, then $E_1\cup E_2 \in Sol(U_1\cup U_2)$. The cost of a solution $F\subseteq E$ is $C(F)=\sum_{e\in F}c(e)$. Let $\opt(U)\in Sol(U)$ be the cost minimizing solution.

The connection to the cost-minimization problem arising from a network design game is obvious. The universe of possible clients corresponds to the possible set of types of the
players in the game and a set of clients corresponds to a set of players with a specific type profile, which we can also think as a set of types. A set of ground elements $F\subseteq E$ is 
feasible 
for a set of typed players, if for each player $i$, there exists an action $a_i \in \F_i(t_i)$ with $a_i\subseteq F$. Thus cost minimization is a sub-additive problem.

Let $\Al: 2^T\rightarrow 2^E$, be an algorithm (not necessarily polynomial time for our purposes) that given a set of clients $U$, it outputs an $\alpha$-approximately optimal feasible solution: $\Al(U)$. Moreover, let $\B(F,x)$ be an algorithm (not necessarily polynomial time) that given a set of elements $F\in Sol(U)$ that is a solution to a set of clients $U$ and a new client $x\in T$, returns a set of elements $\B(U,x)\subseteq E$ such that: $F\cup \B(U,x)\in Sol(U\cup \{x\})$.

\begin{defn}[$(\alpha,\beta)$-Strict Cost Sharing Scheme] An $\alpha$-approximation algorithm $\Al$ and an augmentation algorithm $\B$, admit a $\beta$-strict cost sharing scheme if there exists a cost-sharing function $\xi: 2^T\times T\rightarrow \R^+$, such that for any set of clients $U\subseteq T$ and for any client $x\in T$, $\xi(U,x)$ satisfies the following properties:
\begin{compactitem}
\item If $x\notin U$ then $\xi(U,x)=0$
\item Competitiveness: $\sum_{x\in U}\xi(U,x)\leq c(\opt(U))$
\item $\beta$-strictness with respect to $(\Al,\B)$: $c(\B(\Al(U),x))\leq \beta\cdot \xi(U\cup\{x\},x)$.
\end{compactitem}
\end{defn}

\begin{example}[Steiner Tree, $\alpha=1$ and $\beta=2$]\label{ex:steiner-tree}
As an example, consider the Steiner-Tree problem: an undirected metric graph $G=(V,E)$, a root $r$ and a set $D\subseteq V$ of clients. The problem asks for a Steiner-Tree on $D\cup\{r\}$. The problem admits algorithms $\Al,\B$ and a cost-sharing scheme that is $2$-strict with respect to $(\Al,\B)$. Algorithm $\Al$ is simply the optimal (exponential) Steiner Tree  algorithm. Consider the augmentation algorithm that given a set $D\subseteq V$ of clients and a new client $x$, it routes the client via the shortest path to the optimal steiner tree on $D\cup\{r\}$. Thus the cost of the algorithm is $c(\B(\Al(U),x))=d(\opt(D),x)$. 

Consider the cost shares $\xi(D,x)=\frac{1}{2}d(D-\{x\},x)$, for any $x\in D$ and $0$ otherwise. The $2$-strictness follows by:
$c(\B(\Al(U),x))=d(\opt(D),x)\leq d(D,x)= 2\cdot \xi(D\cup\{x\},x)$. 
Also, observe that $\sum_{x\in D} \xi(D,x)=\frac{1}{2}\sum_{x\in D}d(D-\{x\},x)\leq \frac{1}{2}\cdot MST(D),$ since each client in the minimum spanning tree on node set $D$ has a parent edge of cost at least the minimum distance to the remaining nodes. Since MST on $D$ is a two approximation to the optimal Steiner tree on $D$ we get that $\sum_{x\in D} \xi(D,x)\leq  c(\opt(D))$.
\end{example}


\paragraph{Strict cost sharing schemes and i.i.d. Bayesian network design games.}
We now argue how strict cost sharing schemes imply a bound on the information gap of the game. In this section we focus on the case where players' types are distributed identically and independently according to some distribution $\rho$ on the type space $T$, i.e. $\pi_i=\rho$.\footnote{Observe that the information gap depends on belief assumptions and not only on the structure of the underlying complete information game.}

\begin{theorem}\label{thm:iid}
If the ex-post cost-minimization problem associated with a Bayesian network design game admits an $\alpha$-approximation algorithm $\Al$ and an augmentation algorithm $\B$ with a cost-sharing scheme $\xi$ which is $\beta$-strict with respect to $(\Al,\B)$, then the information gap when types are drawn i.i.d. from some distribution $\rho$ is at most $\beta+\alpha$.
\end{theorem}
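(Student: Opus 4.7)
The plan is to exhibit a randomized strategy profile of the Bayesian game whose expected social cost is at most $(\alpha+\beta)\,\E_{\vec{t}}[\opt(\vec{t})]$; a standard averaging argument over the shared randomness then produces a deterministic (pure) strategy profile witnessing $\IG\leq\alpha+\beta$.

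For the construction, I will have the players use shared randomness to draw an ancillary sample $\vec{w}=(w_1,\ldots,w_{n-1})$ of $n-1$ i.i.d. types from $\rho$ in an ex-ante stage, and commit to the pre-built solution $F:=\Al(\vec{w})$. Upon learning her private type $t_i$, player $i$ plays a feasible action for $t_i$ contained in $F\cup\B(F,t_i)$; this exists because $F\cup\B(F,t_i)$ is by definition a solution for $\vec{w}\cup\{t_i\}$ and, as noted earlier in this section, feasibility can be assumed closed under taking supersets. The resulting universal social cost is then bounded above by $c(F)+\sum_{i=1}^{n}c(\B(F,t_i))$.

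I would bound the two contributions separately. For the pre-built portion, the $\alpha$-approximation property and monotonicity of $\opt(\cdot)$ in the client set give $\E[c(F)]\leq \alpha\,\E[\opt(\vec{w})]\leq \alpha\,\E_{\vec{t}}[\opt(\vec{t})]$. For each augmentation term, $\beta$-strictness gives $\E[c(\B(F,t_i))]\leq \beta\,\E[\xi(\vec{w}\cup\{t_i\},t_i)]$. The key observation is that $(w_1,\ldots,w_{n-1},t_i)$ is a tuple of exactly $n$ i.i.d. draws from $\rho$, so by exchangeability of its coordinates together with the competitiveness axiom for $\xi$,
\[
\E\bigl[\xi(\vec{w}\cup\{t_i\},t_i)\bigr]=\frac{1}{n}\,\E\!\left[\sum_{z\in\vec{w}\cup\{t_i\}}\xi(\vec{w}\cup\{t_i\},z)\right]\leq\frac{1}{n}\,\E\bigl[\opt(\vec{w}\cup\{t_i\})\bigr]=\frac{1}{n}\,\E_{\vec{t}}\bigl[\opt(\vec{t})\bigr].
\]
Summing over $i\in[n]$ cancels the $1/n$ factor and yields $\sum_i\E[c(\B(F,t_i))]\leq \beta\,\E_{\vec{t}}[\opt(\vec{t})]$, so the expected total social cost is at most $(\alpha+\beta)\,\E_{\vec{t}}[\opt(\vec{t})]$.

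The main obstacle, and the technical heart of the argument, is calibrating the shared sample to have exactly $n-1$ entries rather than $n$: this is precisely what makes $\vec{w}\cup\{t_i\}$ an $n$-element exchangeable i.i.d. sample, so that cost-share competitiveness produces the clean factor $1/n$ that cancels against the sum over the $n$ players, rather than an extra blow-up from sub-additivity of $\opt$. A mild subtlety is that types may coincide, which is handled by treating $\vec{w}\cup\{t_i\}$ as an indexed multiset (the exchangeability is unaffected). Finally, since the bound above is an expectation over the shared randomness $\vec{w}$, there must exist a realization $\vec{w}^{*}$ for which the corresponding pure strategy profile $\vec{s}$ (obtained by hard-coding $\vec{w}=\vec{w}^{*}$ into each $s_i$) satisfies $K(\vec{s})\leq (\alpha+\beta)\,\E_{\vec{t}}[\opt(\vec{t})]$, which yields $\IG\leq \alpha+\beta$ as required.
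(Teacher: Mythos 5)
Your proposal is correct and follows essentially the same route as the paper: an ex-ante sample of $n-1$ i.i.d. types, a pre-built solution via $\Al$, per-player augmentation via $\B$, and the exchangeability/regrouping step that converts the per-player cost-share bound into the competitiveness bound on an $n$-element i.i.d. sample. The only additions are cosmetic — you spell out the derandomization and the multiset subtlety, both of which the paper handles with a parenthetical remark.
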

\begin{proof}

We construct a set of Bayesian strategies $\vec{s}=(s_1,\ldots,s_n)$ as follows: we draw a set $D=\{w_1,\ldots,w_n\}$ of random samples of $n-1$ clients/types according to $\pi=\rho^n$. Then we compute a feasible solution $\Al(D)$ on this random draw of clients using algorithm $\Al$. Let $R=\{t_1,\ldots,t_n\}$ denote the set of true types/clients of the players. At the interim stage (after viewing the true type), each player computes the augmentation required to serve his true type, on top of $\Al(D)$, using the augmentation algorithm $\B$, i.e. his strategy is $s(t_i)=\Al(D)\cup \B(\Al(D),t_i)$. In fact, we can define $s_i(t_i)$ as the cheapest subset of $\Al(D)\cup \B(\Al(D),t_i)$ that remains a feasible action. Observe that each player uses only information available to him at the interim stage.


It remains to bound the expected cost of the final solution implied by the strategies constructed by the above process (the process can also be derandomized). For any instance of the real types $R$ and fake types $D$, the cost of the algorithm is the cost of the set: $\Al(D)\cup (\cup_{i\in [n]} \B(\Al(D),t_i))$. The expected cost of $\Al(D)$ is bounded by: 
\begin{align*}
\E_{D\sim \rho^{n-1}}\left[c(\Al(D))\right]\leq \alpha\cdot \E_{D\sim \rho^{n-1}}\left[c(\opt(D))\right]
\leq\alpha \cdot \E_{R\sim \rho^n}\left[c(\opt(R))\right],
\end{align*}
since optimal cost is monotone in the client set. It remains to bound the cost of the augmentations, denoted by $\Aug$, by $\beta\cdot \E_{R\sim \pi^n}\left[c(\opt(R))\right]$.
By symmetry:
\begin{align*}
\textstyle{\Aug = n \cdot \E_{t\sim \rho, D\sim \rho^{n-1}}\left[c(\B(\Al(D),t))\right]\leq n\cdot \beta\cdot \E_{t\sim \rho, D\sim \rho^{n-1}}\left[\xi(D\cup\{t\},t)\right]}
\end{align*}
Observe that the latter expectation is equivalent to the following portion: draw a random sample of $n$ clients according to $\rho^n$ and then pick one client at random and account for his cost-share when adding him to the rest of the client set.\footnote{This regrouping trick was also used in \cite{Garg2008}.} From this we get:
\begin{align*}
\Aug \leq~& \textstyle{n\cdot \beta\cdot \E_{t\sim \rho, D\sim \rho^{n-1}}\left[\xi(D\cup\{t\},t)\right]= n\cdot \beta\cdot  \E_{R\sim \rho^n}\left[\frac{1}{n}\sum_{t\in R}\xi(R,t)\right]}\\
=~&\textstyle{\beta\cdot \E_{R\sim \rho^n}\left[\sum_{t\in R}\xi(R,t)\right]\leq ~\beta\cdot \E_{R\sim \rho^n}\left[c(\opt(R))\right]},
\end{align*}
where the last inequality follows from competitiveness of the strict cost shares. 
\qed\end{proof}

Theorem \ref{thm:iid} implies several results on the information gap for Bayesian games associated with many optimization problems. For instance, we saw in the previous section that the Steiner Tree problem has a $1$-approximation algorithm with a $2$-strict cost sharing scheme. Moreover, using the existence of strict cost-sharing schemes in the literature \cite{Gupta2004,Gupta2007,Garg2008}, we get the following bounds and the main theorem of this section:\vsdelete{. that the Steiner Forest has a $2$-approximation algorithm with a $3$-strict cost sharing scheme and the vertex cover with has
a $3$-approximation algorithm with $3$-strict cost shares ($d+1$, for $d$-Hypergraph cover instead of $3$). All these problems have corresponding Bayesian Network Design Games. Thus we get the following corollaries:}
\begin{theorem}
When types are i.i.d., the information gap for:
i) Bayesian multicast game is at most $3$, ii) Bayesian undirected network design game is at most $5$, iii) Bayesian vertex cover game (App. \ref{sec:vertex-cover}) is at most $6$ (and $2(d+1)$ for $d$-hypergraph cover). Hence, for all these games the BPoS is $O(\log(n))$ when types are i.i.d..
\end{theorem}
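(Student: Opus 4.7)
The plan is to apply Theorem \ref{thm:iid} once for each of the three families of games, and then combine the resulting information gap bounds with the Bayesian Potential Method theorem from Section \ref{sec:method}.

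First, for the Bayesian multicast game the ex-post social cost minimization problem is the minimum-cost Steiner Tree. Example \ref{ex:steiner-tree} already exhibits a $1$-approximation algorithm $\Al$ and an augmentation algorithm $\B$ together with a cost sharing scheme that is $2$-strict with respect to $(\Al,\B)$. Plugging $\alpha=1$ and $\beta=2$ into Theorem \ref{thm:iid} gives an information gap of at most $1+2=3$.

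Second, for the Bayesian undirected network design game the ex-post problem is Steiner Forest. Here I would invoke the algorithm of \cite{Gupta2007}, which is a $2$-approximation equipped with an augmentation routine admitting a $3$-strict cost sharing scheme. Theorem \ref{thm:iid} with $\alpha=2$ and $\beta=3$ then yields an information gap of at most $5$. Third, for the Bayesian vertex cover game (and more generally $d$-hypergraph cover) the ex-post problem is the standard covering problem, for which \cite{Gupta2007,Garg2008} give a $(d+1)$-approximation with a $(d+1)$-strict cost sharing scheme; specialized to vertex cover ($d=2$) this is $\alpha=\beta=3$. Theorem \ref{thm:iid} thus gives an information gap of at most $2(d+1)$ in general, and $6$ for ordinary vertex cover.

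Finally, for the overall BPoS statement I would invoke the Bayesian Potential Method theorem. Each of these games is a Bayesian potential game whose universal potential is Rosenthal's potential, which as noted in Section \ref{sec:network-design} is $(1,\log n)$-close to the universal social cost. Hence $\text{BPoS}\leq \frac{\mu}{\lambda}\cdot \IG = \log(n)\cdot O(1) = O(\log n)$ for each of the three games.

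The main obstacle is not the structure of the argument, which is essentially immediate given Theorem \ref{thm:iid}, but rather confirming that the cost sharing schemes cited from the literature satisfy precisely the definition of $\beta$-strictness used in this paper, and in particular the competitiveness inequality $\sum_{x\in U}\xi(U,x)\leq c(\opt(U))$ against the \emph{optimal} cost rather than merely against an $\alpha$-approximate one. A careful reading of \cite{Gupta2007,Garg2008} is needed to confirm the constants $\alpha$ and $\beta$ are those claimed under our precise definition, after which the bounds are plug-and-play.
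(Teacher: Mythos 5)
Your proposal is correct and follows exactly the paper's own route: the paper likewise derives each bound by instantiating Theorem~\ref{thm:iid} with the Steiner Tree scheme of Example~\ref{ex:steiner-tree} ($\alpha=1$, $\beta=2$), the Steiner Forest scheme from the literature ($\alpha=2$, $\beta=3$), and the $(d+1)$-approximate, $(d+1)$-strict covering scheme, then appeals to the $(1,\log n)$-closeness of Rosenthal's potential and the Bayesian Potential Method for the $O(\log n)$ BPoS conclusion. Your closing caveat about verifying that the cited schemes satisfy competitiveness against $c(\opt(U))$ is well taken but is exactly the due diligence the paper also delegates to \cite{Gupta2004,Gupta2007,Garg2008}.
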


\paragraph{Cross-monotonicity and non-i.i.d. Bayesian network design games.}
We also extend the previous analysis to non-i.i.d. Bayesian games, if we impose the extra restriction on the optimization problem, that the cost-shares are also cross-monotone, i.e. $\xi(D,x)$ is monotone non-increasing in $D$. 

\begin{theorem}\label{thm:cross-monotone}
If the ex-post cost-minimization problem associated with a Bayesian network design game admits an $\alpha$-approximation algorithm $\Al$ and an augmentation algorithm $\B$ with a cost-sharing scheme $\xi$ which is $\beta$-strict with respect to $(\Al,\B)$ and cross-monotonic, then the information gap of the corresponding Bayesian game when each player's type is drawn independently from some distribution $\pi_i$, is at most $\beta+\alpha$.
\end{theorem}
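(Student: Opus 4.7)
The plan is to mimic the randomized construction of Theorem \ref{thm:iid}, but to replace the symmetric ``regrouping'' step (which relied on types being i.i.d.\ from a single $\rho$) by a cross-monotonicity argument that accommodates non-identical marginals. Specifically, I would have the players share common ex-ante randomness: each player $i$ draws a ``fake'' sample $w_i\sim\pi_i$ independently, producing a joint sample $D=\{w_1,\ldots,w_n\}\sim\pi$. Every player then computes $\Al(D)$, and at the interim stage player $i$ plays $s_i(t_i)=\Al(D)\cup\B(\Al(D),t_i)$ (or a minimal feasible subset). Because $D$ is generated from a seed that is independent of the true type profile, this yields a valid Bayesian (mixed) strategy, and a standard probabilistic-method step derandomizes it.

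The total cost splits into the cost of $\Al(D)$ plus the augmentation cost $\Aug=\sum_i\E_{t_i,D}[c(\B(\Al(D),t_i))]$. The first piece is bounded by $\alpha\cdot\E_{R\sim\pi}[c(\opt(R))]$ using the $\alpha$-approximation guarantee and monotonicity of $\opt$ in the client set, noting that $D$ is itself distributed as $\pi$. For the augmentation, $\beta$-strictness gives
\[
\E_{t_i,D}[c(\B(\Al(D),t_i))]\;\le\;\beta\cdot\E_{t_i,D}[\xi(D\cup\{t_i\},t_i)].
\]

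The central new step uses cross-monotonicity. Writing $D_{-i}=D\setminus\{w_i\}$, I have $D_{-i}\cup\{t_i\}\subseteq D\cup\{t_i\}$ as sets, so cross-monotonicity of $\xi$ yields $\xi(D\cup\{t_i\},t_i)\le \xi(D_{-i}\cup\{t_i\},t_i)$. But $D_{-i}\cup\{t_i\}=\{w_j:j\neq i\}\cup\{t_i\}$ where the $w_j\sim\pi_j$ and $t_i\sim\pi_i$ are mutually independent, so it has the same distribution as a fresh sample $R\sim\pi$ with $R_i=t_i$. Therefore $\E_{t_i,D}[\xi(D\cup\{t_i\},t_i)]\le\E_{R\sim\pi}[\xi(R,R_i)]$, and summing over $i$ together with competitiveness yields
\[
\Aug\;\le\;\beta\cdot\E_{R\sim\pi}\Bigl[\textstyle\sum_i\xi(R,R_i)\Bigr]\;\le\;\beta\cdot\E_{R\sim\pi}[c(\opt(R))].
\]
Adding the two bounds gives the desired $\beta+\alpha$ ratio between the expected cost of the constructed strategy profile and $\E_R[c(\opt(R))]$.

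The main obstacle I anticipate is bookkeeping around the set-versus-tuple treatment of $D$: cross-monotonicity is phrased for sets, so one must check that coincidences such as $w_i=w_j$ or $w_i=t_i$ do not break either the subset inclusion or the distributional identification. Neither does: the inclusion $D_{-i}\cup\{t_i\}\subseteq D\cup\{t_i\}$ holds in all cases, and the coupling only needs the marginal distribution of the cost-share summand $\xi(R,R_i)$, which is unaffected by duplicates. Once that small verification is carried out the bound follows, and composing with Theorem~\ref{thm:iid}'s invocation of the Bayesian Potential Method delivers the statement.
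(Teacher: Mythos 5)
Your proposal is correct and follows essentially the same route as the paper's proof: sample a fake profile $D\sim\pi$ coordinate-wise, build $\Al(D)$, augment each true type, bound the augmentation via $\beta$-strictness, and then use cross-monotonicity to drop $w_i$ from $D$ so that $D_{-i}\cup\{t_i\}$ is distributed exactly as $R\sim\pi$, after which competitiveness closes the argument. The only addition is your (valid) remark on set-versus-tuple bookkeeping, which the paper glosses over.
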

\begin{proof}
We slightly alter the process in the proof of Theorem \ref{thm:iid} to take into account that clients are not drawn identically. 
\begin{algorithm}\label{alg:noniid}
Let $R=(t_1,\ldots,t_n)$ be the input clients, which are drawn independently and identically according to $\pi=(\pi_1,\ldots,\pi_n)$\\
Draw a random sample of $n$ clients $D=(d_1,\ldots,d_n)\sim \pi$\\
Compute $\Al(D)$\\
Output: $\forall i\in [n]: s_i(t_i) = \Al(D)\cup \B(\Al(D),t_i)$
\caption{Preprocessing, random-sampling algorithm for constructing strategies of Bayesian game with non-identical distributions.}
\end{algorithm}
Intuitively, the algorithm draws a random sample of $n$ clients according to the commonly known distribution of types. Then it computes a feasible solution on this random draw of clients using algorithm $\Al$.  Subsequently, each player computes the augmentation required to serve his true type, on top of the randomly sampled ones in the first stage. Then his strategy is $s(t_i)=\Al(D)\cup \B(\Al(D),t_i)$, which we know that it is in $Sol(\{t_i\})$ by the definition of the augmentation algorithm. 

Similar to Theorem \ref{thm:iid} it suffices to bound the expected cost of the final solution implied by the strategies constructed by Algorithm \ref{alg:noniid}, which is the expected cost of the set of elements: $\Al(D)\cup (\cup_{i\in [n]} \B(\Al(D),t_i))$. 

First, observe that the expected cost of the solution $\Al(D)$ on the randomly sampled clients $D$ is at most: 
$$\E_{D\sim \pi}\left[c(\Al(D))\right]\leq \alpha\cdot \E_{D\sim \pi}\left[c(\opt(D))\right]=\alpha \cdot \E_{R\sim \pi}\left[c(\opt(R))\right],$$ 
Thus it remains to bound the cost of the augmentations by at most $\beta\cdot \E_{R\sim \pi}\left[c(\opt(R))\right]$.
The expected cost of augmenting input client $i$ is:
\begin{align*}
\Aug_i =& \E_{t_i\sim \pi_i, D\sim \pi}\left[c(\B(\Al(D),t_i))\right]\leq \beta\cdot \E_{t_i\sim \pi_i, D\sim \pi}\left[\xi(D\cup\{t_i\},t_i)\right]
\end{align*}
By the cross-monotonicity of the cost shares:
\begin{align*}
\Aug_i \leq &  \beta\cdot \E_{t_i\sim \pi_i, D\sim \pi}\left[\xi(D_{-i}\cup\{t_i\},t_i)\right]
=  \beta\cdot \E_{t_i\sim \pi_i, D_{-i}\sim \pi_{-i}}\left[\xi(D_{-i} \cup\{t_i\},t_i)\right]\\
=&  \beta\cdot \E_{t_i\sim \pi_i, R_{-i}\sim \pi_{-i}}\left[\xi(R_{-i} \cup\{t_i\},t_i)\right] = 
\beta\cdot \E_{R\sim \pi}\left[\xi(R,t_i)\right]
\end{align*}
Using the competitiveness of the cost shares we can bound the total cost of the augmentation by the desired quantity:
\begin{align*}
\sum_i \Aug_i \leq \beta\cdot\sum_i \E_{R\sim \pi}\left[\xi(R,x)\right]= \beta\cdot  \E_{R\sim \pi}\left[\sum_{i}\xi(R,x_i)\right] \leq \beta\cdot \E_{R\sim \pi}\left[c(\opt(R))\right]
\end{align*}
Thus we have established that the expected cost of the set of elements: $\Al(D)\cup (\cup_{i\in [n]} \B(\Al(D),t_i))$, is at most $(\beta+\alpha)\cdot  \E_{R\sim \pi}\left[c(\opt(R))\right]$.
\end{proof}

By cross monotonicity of the strict cost-shares for the Steiner Tree problem, we get:
\begin{theorem}
When distributions of types are independent, the information gap for the Bayesian multicast game is at most $3$ and the Bayesian price of stability is $O(\log(n))$.
\end{theorem}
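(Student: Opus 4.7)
The plan is to combine Theorem \ref{thm:cross-monotone} with the strict cost sharing scheme for Steiner Tree presented in Example \ref{ex:steiner-tree}, and then invoke the Bayesian Potential Method. First I would observe that the ex-post social cost minimization problem associated with the Bayesian multicast game is precisely the metric Steiner Tree problem: the ground elements are the edges of the metric graph $G$, the root $r$ is fixed, and a type $t_i = s_i \in V$ contributes a client that must be connected to $r$. Any feasible action for a type profile is then a set of edges that contains an $s_i$–$r$ path for each player, i.e. a Steiner tree on $\{r\} \cup \{s_i\}$.

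Next, I would verify that the $2$-strict cost sharing scheme $\xi(D,x) = \tfrac{1}{2}\, d(D - \{x\}, x)$ from Example \ref{ex:steiner-tree} is cross-monotone. This is the only genuinely new observation required: if $D \subseteq D'$ and $x \in D$, then $D - \{x\} \subseteq D' - \{x\}$, so the minimum distance $d(D' - \{x\}, x)$ is no larger than $d(D - \{x\}, x)$. Hence $\xi(D',x) \le \xi(D,x)$, which is exactly cross-monotonicity.

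With cross-monotonicity in hand, I would apply Theorem \ref{thm:cross-monotone} with $\alpha = 1$ (optimal Steiner tree used as $\Al$) and $\beta = 2$ to conclude that, when each player's source is drawn independently from his own distribution $\pi_i$ over $V$, the information gap of the Bayesian multicast game satisfies $\IG \le \alpha + \beta = 3$. Finally I would invoke the Bayesian Potential Method theorem together with the fact, recalled in Section \ref{sec:network-design}, that Rosenthal's potential is $(1,\log(n))$-close to the universal social cost. This yields
\begin{equation*}
\text{BPoS} \le \frac{\mu}{\lambda} \cdot \IG \le \log(n) \cdot 3 = O(\log(n)),
\end{equation*}
completing the proof.

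I do not expect any significant obstacle: the real technical work has already been done in the proof of Theorem \ref{thm:cross-monotone} and in the cost sharing analysis of Example \ref{ex:steiner-tree}. The one step that genuinely needs to be checked is cross-monotonicity of the shortest-distance cost shares, and this is an immediate monotonicity property of $d(\cdot, x)$ on supersets. Everything else is bookkeeping of the constants $\alpha = 1$, $\beta = 2$ and the potential closeness parameters $(1, \log(n))$.
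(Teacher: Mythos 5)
Your proposal is correct and follows exactly the route the paper takes: apply Theorem \ref{thm:cross-monotone} with the Steiner Tree scheme of Example \ref{ex:steiner-tree} ($\alpha=1$, $\beta=2$) and then combine with the Bayesian Potential Method and the $(1,\log(n))$-closeness of Rosenthal's potential. The only difference is that you explicitly verify cross-monotonicity of $\xi(D,x)=\tfrac{1}{2}d(D-\{x\},x)$, which the paper asserts without proof; your one-line monotonicity argument is the right one.
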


An interesting case of the non-i.d.d. Bayesian multicast game is the \emph{independent decisions model} \cite{Gupta2004}: each node in the graph $G$ is associated with a player $i$ and this player enters the game with some probability $p_i$, or is not present with some probability $1-p_i$. One can view this setting as a Bayesian multicast game among $|V|$ players, where the distribution $\pi_i$ of a player associated with a node $i$ is a two point distribution that puts a probability $p$ on the node associated with player $i$ and probability $1-p$ on the root $r$. 

\vsdelete{The existence of cross-monotonic strict cost sharing schemes for the other problems that we are interested in, is a very interesting long-standing open question in the literature (see \cite{Gupta2011}), since it has implications on multi-stage stochastic versions of the optimization problems.} 

\vscomment{
We should look at the Sigal's discrete choice models and show whether the PoS of 2 carries over to the setting where the opinion of each player is private information and drawn from some commonly known distribution.}

\vscomment{We should also refer to the simultaneous second price auction, PoS=1 for complete information.}

\bibliographystyle{plain}
\bibliography{price-of-stability}

\begin{appendix}

\section{Vertex Cover Games}\label{sec:vertex-cover}
 Consider the following setting: there is a set of nodes $v\in V$, each associated with a cost $c_v$. Each player $i\in [n]$, is associated with an edge $(u,v)\in V\times V$, which represents the pair of nodes that
can satisfy (coiver) player $i$'s needs. This edge is the private information of player $i$ and is drawn from some distribution $\pi_i$ on the set of node pairs. The action of a player 
$i$ with type $t_i=(u,v)\in V\times V$, is to choose one of the two nodes $u$ and $v$. If many players pick the same node then they share equally the cost of the node. This game can
also be cast as a Bayesian Network Design Game where the ground set of elements $E$ is equal to the set of nodes $V$ and the action set of a player with type $t_i=(u,v)$
is the collection of singletons $\{\{u\},\{v\}\}$.
 
\paragraph{$d$-Hypergraph Cover} More generally, we can also consider the generalization of the above game where each player is not associated with an edge $(u,v)$, but rather with a hyperedge $h\subseteq V$ of size $d$, which is his private type and is drawn from some distribution $\pi_i$ on the set of hyperedges on $V$. Then the set of feasible actions of a player with type $h$, is to choose one node in his hyperedge. Intuitively, the nodes in the hyperedge are the nodes that can "satisfy" player $i$. By similar analogy, the above setting is also a Bayesian Network Design Game.

\section{Omitted Proofs}\label{sec:omitted}

\begin{proofof}{Observation \ref{lem:bayesian-potential}}
Consider a strategy profile $\svec$ of the Bayesian Potential Game and a strategy $s_i'$ for player $i$. Then:
\begin{align*}
K_i(\svec)-K_i(s_i',\svec_{-i}) =~& \E_{\tvec}\left[C_i(\svec(\tvec))-C_i(s_i'(t_i),\svec_{-i}(\tvec_{-i}))\right] \\
=~& \E_{\tvec}\left[c_i(\svec(\tvec))-c_i(s_i'(t_i),\svec_{-i}(\tvec_{-i}))\right]\\
=~& \E_{\tvec}\left[\Phi(\svec(\tvec))-\Phi(s_i'(t_i),\svec_{-i}(\tvec_{-i}))\right] = \Psi(\svec)-\Psi(s_i',\svec_{-i})
\end{align*}
\end{proofof}

\end{appendix}

\end{document}